%% file: vldb-main.tex
\documentclass[sigconf, nonacm]{acmart}

\usepackage{xspace}
\usepackage{multirow}
\usepackage{amsmath}
\usepackage{algorithm}
\usepackage{subfig}

\usepackage[noend]{algpseudocode}

\newtheorem{lemma}{Lemma}
\newtheorem{theorem}{Theorem}

\newcommand{\Sys}{\textsc{AsymCache}\xspace}

\DeclareMathOperator*{\argmin}{argmin}

\makeatletter
\algnewcommand{\LineComment}[1]{\Statex \hskip\ALG@thistlm \(\triangleright\) #1}
\makeatother

\begin{document}
\title{Multi-Segment Attention: Enabling Efficient KV-Cache Management for Faster Large Language Model Serving}

\author{Chunan Shi}
\affiliation{%
  \institution{Peking University}
  \city{Beijing}
  \country{China}
}
\email{spirited\_away@pku.edu.cn}

\author{Yilei Chen}
\affiliation{%
  \institution{Peking University}
  \city{Beijing}
  \country{China}
}
\email{chenyilei@stu.pku.edu.cn}

\author{Yilin Chen}
\affiliation{%
  \institution{Peking University}
  \city{Beijing}
  \country{China}
}
\email{yilinchen25@stu.pku.edu.cn}

\author{Xupeng Miao}\authornote{Corresponding authors.}
\affiliation{%
  \institution{Peking University}
  \city{Beijing}
  \country{China}
}
\email{xupeng.miao@pku.edu.cn}

\author{Bin Cui}\authornotemark[1]
\affiliation{%
  \institution{Peking University}
  \city{Beijing}
  \country{China}
}
\email{bin.cui@pku.edu.cn}

\input{chapters/abstract}

\maketitle

\input{chapters/introduction}

\input{chapters/background}
\input{chapters/design}
\input{chapters/scheduler-new}
\input{chapters/evaluation}
\input{chapters/conclusion}


\bibliographystyle{ACM-Reference-Format}
\bibliography{sample}

\clearpage
\appendix
\input{chapters/appendix}

\end{document}

%% file: chapters/abstract.tex
\begin{abstract}

Large Language Model (LLM) inference relies on key–value (KV) caches to avoid redundant attention computation.
While approximate KV cache retention techniques reduce memory usage by sacrificing model accuracy, lossless approaches instead evict KV cache blocks from GPU memory and reconstruct them on demand to preserve exact outputs.
Existing lossless KV cache management systems primarily base eviction decisions on access frequency or positional heuristics, without considering how different KV cache blocks affect the execution efficiency of GPU attention kernels.

In this paper, we propose \Sys, a computation-latency-aware KV cache management system for LLM inference that explicitly aligns cache residency decisions with GPU attention kernel performance, including three key components: 
Multi‑Segment Attention (MSA) for efficient non‑contiguous KV context processing, a cache eviction policy that jointly optimizes hit rate and position‑aware recomputation cost, and an adaptive chunking scheduler for high hardware utilization.
Experiments show that \Sys reduces TTFT by up to 1.90–2.03× and time-per-output-token (TPOT) by 1.62–1.71× over latest baselines, confirming the effectiveness of the method in common workloads and validating its design goal of balancing computational efficiency with cache hit rate.
Moreover, the low-level design of \Sys allows seamless integration into agent serving systems such as Continuum, where it further reduces average job latency by up to 18.1\%.

\end{abstract}

%% file: chapters/introduction.tex
\section{Introduction}

Large Language Models (LLMs) have become a cornerstone of modern data systems~\cite{DBLP:journals/pvldb/ZhouLSLCWLFZ24,DBLP:journals/pvldb/HuPK24,DBLP:journals/pvldb/ChangG25,zhang2025self,tan2025can,DBLP:journals/pvldb/HuangLZZYLZCCL25,giannakouris2025lambda,lu2025adda,DBLP:journals/pvldb/HuangCWWW26}, enabling a wide range of applications such as conversational agents~\cite{zhao2024chat2data}, retrieval-augmented generation~\cite{jiang2024chameleon}, and multi-turn interactive analytics~\cite{yan2025contextcache}.
During auto-regressive inference, LLMs maintain key–value (KV) caches to store intermediate attention states from previous tokens.
While KV caches are essential for avoiding redundant computation, they also introduce a significant memory footprint: for long contexts, multiple concurrent user sessions, and multi-turn workloads, KV caches can easily dominate GPU memory consumption.
This memory pressure poses a fundamental scalability challenge.
In practical LLM serving systems, GPU memory is often the primary bottleneck that limits the number of concurrent requests, session lengths, and throughput.
For example, for a 70B-scale LLM with a hidden size of 8,192, storing KV caches for a single 32K-token session requires over 40-GB memory in half precision, exceeding the capacity of a single A100 GPU.
Moreover, KV caches grow dynamically with sequence length and concurrent session number, making memory pressure increasingly severe in long-context and interactive workloads.
In practical LLM serving systems, GPU memory becomes the primary bottleneck that limits the number of concurrent requests, session lengths, and throughput.
To address this, a growing body of research has explored KV cache management
techniques that trade memory usage, computation cost, and model quality.

One line of work~\cite{DBLP:journals/pvldb/ChenZLZZLLLJCDYJCZLYYY26,yan2025contextcache,zhang2025pqcache,agarwal2025cache,xia2023flash,yuan2025depcache} focuses on sparse or approximate KV cache retention, aiming to reduce memory usage by keeping only a subset of KV entries deemed ``important''.
These systems leverage heuristics or learned importance metrics to discard or compress less critical KV states. Although effective in reducing memory footprint, such approximation-based techniques inevitably introduce accuracy degradation, potentially affecting generation quality and making them unsuitable for applications that require high-quality or exact model outputs.
To preserve output correctness, another mainstream direction adopts lossless KV cache management strategies. Instead of permanently discarding KV states, these systems temporarily evict KV cache blocks from GPU memory and reconstruct them on demand, either via recomputation or swapping from host memory.
This paradigm ensures bitwise-equivalent outputs while enabling more flexible memory usage.
Our work falls into this lossless KV cache management paradigm and focuses on a central design question within it: \textit{which KV cache blocks should remain resident in GPU memory when memory is constrained?}
In other words, we study the eviction policy that determines the residency of KV cache blocks, with the goal of improving serving efficiency without sacrificing output correctness.

Existing lossless KV cache management systems answer this question with different eviction policies. For example, vLLM~\cite{kwon2023efficient} and SGLang~\cite{zheng2024sglang} support prefix caching with LRU-style eviction, enabling reuse of shared prefixes such as system prompts.
Pensive~\cite{yu2025stateful} considers conversation recency and positional recomputation cost, preferentially evicting leading tokens that are cheaper to recompute.
Other systems improve cache efficiency by considering access frequency~\cite{li2025hotprefix} or layer-wise scheduling~\cite{DBLP:journals/pvldb/MaHULCLJ26}.

However, these policies do not explicitly model the future reuse likelihood of individual KV cache blocks and their marginal impact on GPU attention-kernel latency.
As a result, they cannot quantify how keeping a particular block on GPU affects subsequent attention computation, including memory access behavior, thread-block utilization, and end-to-end attention latency.
Ignoring this block-level heterogeneity leads to suboptimal eviction decisions and leaves substantial GPU performance potential untapped.

Motivated by this insight, we propose \Sys, an expected-latency-aware KV cache management system for LLM inference, designed to explicitly align KV cache residency decisions with GPU attention kernel performance. It consists of three tightly integrated components:
\textbf{(1) Multi-Segment Attention}, a novel attention kernel that decomposes attention computation into multiple non-contiguous segments, enabling fine-grained control over which KV cache blocks participate in GPU-resident computation. \textbf{(2) Computational-Aware Block Evictor}, which efficiently prioritizes KV cache eviction based on each block’s marginal contribution to expected attention latency considering both future access and position. \textbf{(3) Chunking Scheduler}, which enables prefill chunks span multiple segments and dynamically adjusts chunk sizes to balance recomputation overhead and GPU kernel efficiency under varying workloads.

We implement \Sys in vLLM and evaluate it across diverse workloads, including multi-session, multi-round conversational, and agentic scenarios with long contexts. Experimental results demonstrate that \Sys significantly outperforms state-of-the-art baselines. In particular, under realistic multi-session workloads, \Sys reduces time-to-first-token (TTFT) by up to $1.86$× and system time-per-output-token (TPOT) by up to $1.62$×, while preserving exact model outputs. 
In addition, operating at block granularity, \Sys can be seamlessly integrated into existing agent serving systems such as Continuum~\cite{li2025continuum}, which also optimize KV Cache eviction at request granularity. The evaluation demonstrates that our approach further average reduces job latency by 18.1\%, confirming its orthogonality and broad applicability.

In summary, this work makes the following contributions:

\begin{itemize}
    \item We identify and quantify the computational asymmetry of KV cache blocks in GPU attention kernels, revealing a critical gap in existing KV cache management strategies.
    \item We design \Sys, the first KV cache management system that is explicitly aware of the expected latency.
    \item We propose a novel kernel and system-level techniques that optimize both memory usage and execution efficiency.
    \item We demonstrate substantial performance gains over prior approaches through extensive experimental evaluation.
\end{itemize}

%% file: chapters/background.tex
\section{Background}
\begin{figure}[t]
    \centering
    \includegraphics[width=0.85\linewidth]{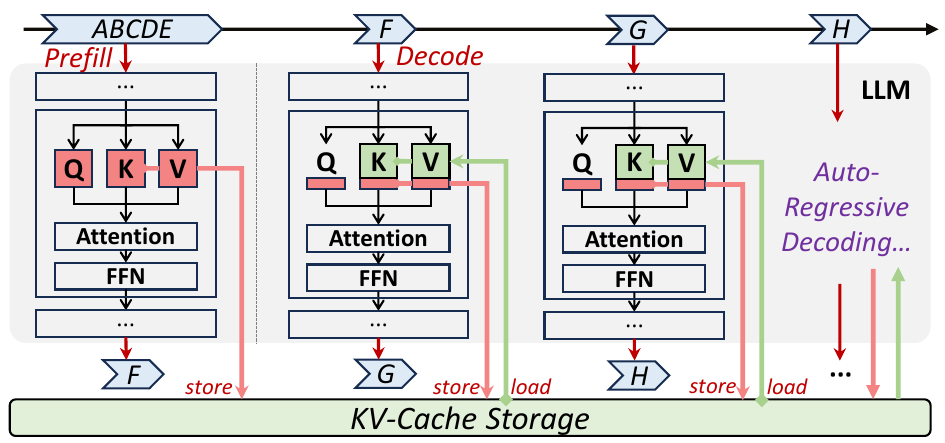}
    \caption{LLM inference with KV Cache.}
    \label{fig:intro}
\end{figure}

\subsection{LLM Inference}
Modern large language models (LLMs), such as GPT~\cite{achiam2023gpt}, Llama~\cite{dubey2024llama}, and Gemini~\cite{team2023gemini}, are built upon the Transformer architecture. Each Transformer layer primarily consists of an Attention~\cite{vaswani2017attention} module and a Feed‑Forward Network (FFN). The Attention module is pivotal to the model’s performance, with common variants including Multi‑Head Attention (MHA), Grouped‑Query Attention (GQA)~\cite{ainslie2023gqa}, and Multi-Head Latent Attention (MLA)~\cite{liu2024deepseek}. The standard attention computation can be formulated as:

\begin{equation}
\label{eq:attn}
\begin{gathered}
    Q = W_Q \cdot X, \quad K = W_K \cdot X, \quad V = W_V \cdot X \\
    A = QK^{T} / \sqrt{d_k} ,~~O = \text{softmax}(A) \cdot V
\end{gathered}
\end{equation}
During inference, an input prompt is tokenized into a sequence of tokens, each associated with its own \( Q, K, V \) vectors. Notably, the computation complexity of Equation~\ref{eq:attn} grows quadratically with the sequence length.
LLM inference follows an auto-regressive process. As illustrated in Figure \ref{fig:intro}, the model consumes the input sequence to generate the first output token. This output is then appended to the input for the next decoding step, and the process repeats until an end‑of‑sequence token is produced or the maximum output length is reached.
For a given request, the initial step, called the \textit{prefill phase}, requires computing the \( Q, K, V \) vectors for every token in the input prompt, which becomes computationally intensive for long sequences. In subsequent decoding steps, only the \( Q, K, V \) vectors of the newly generated token need to be computed, and the \( K, V \) vectors of previous tokens can be reused. To avoid redundant computation, modern inference systems~\cite{kwon2023efficient,zheng2024sglang,holmes2024deepspeed,tensorrt-llm} maintain a \textit{KV‑Cache} that stores these historical \( K, V \) states. Since the KV‑Cache can occupy substantial memory, numerous optimizations~\cite{kwon2023efficient, zhang2025pqcache,yuan2025depcache,gao2025apt} have been proposed to manage its storage and access efficiently. These subsequent steps are called \textit{decoding phase}s, and are usually memory-bounded.

\subsection{Prefix Caching}
Prefix caching is a key technique for reducing the computational overhead of the prefill phase, offering significant benefits for requests with long context . By storing the \( K, V \) vectors of previously processed tokens in KV‑Cache, system can check whether a portion of a new request’s tokens already have computed values, thereby avoiding redundant computations and accelerating prefill. 
This approach is widely adopted in modern inference engines: for instance, SGLang~\cite{zheng2024sglang} manages cache at the request level using a Radix‑tree, while vLLM~\cite{kwon2023efficient} operates at the granularity of token blocks with an LRU eviction policy. 
The cache‑key is not the individual token, but the complete token sequence from the start of the request to the current position. Consequently, existing systems typically prioritize retaining tokens from the beginning of the sequence to maximize the cache hit rate, hence the term ``prefix'' caching. Many recent works (e.g.,~\cite{wang2025kvcache,li2025hotprefix,ye-etal-2024-chunkattention}) conduct further exploration based on the prefix caching property. For example, HotPrefix~\cite{li2025hotprefix} applies hotness-aware method to schedule KV-Cache between hierarchical storage.
On the other hand, Pensieve~\cite{yu2025stateful} observes that tokens in later positions carry a higher recomputation cost and thus prioritizes caching these tokens for single‑user, multi‑turn dialogues. In contrast, \Sys holistically evaluates the trade‑off between the hit‑rate benefit of caching earlier tokens and the recomputation savings from caching later tokens, leading to a more efficient and general‑purpose cache management strategy.

\subsection{Chunked-Prefill}
Many LLM serving systems co-locate compute‑intensive prefill tasks and memory‑bound decode tasks on the same GPU instance. This can lead to head‑of‑line blocking, where long prefill requests delay decode‑stage requests and increase tail latency. To mitigate this, Sarathi‑serve~\cite{agrawal2024taming} introduced the chunked‑prefill approach, which splits a long prefill request into smaller chunks. These chunks are then interleaved with running decode requests, effectively reducing blocking and improving latency fairness. This method has been widely adopted in systems like DeepSpeed‑MII~\cite{holmes2024deepspeed} and is particularly effective for long‑sequence workloads.
Further kernel‑level optimizations have been developed to improve resource utilization under this paradigm. POD‑Attention~\cite{kamath2025pod} fuses the prefill and decode operators into a single CUDA kernel, better overlapping GPU compute and memory bandwidth. Similarly, Drift~\cite{cui2025optimizing} employs fine‑grained control over the number of Streaming Multiprocessors (SMs) allocated to different kernels and uses request partitioning to enable efficient concurrent execution of prefill and decode tasks on a single GPU.
\Sys integrates chunked‑prefill scheduling with its novel cache‑management strategy to deliver an efficient and responsive LLM inference service.

\section{Observation and Motivation}
\label{sec:motivation}
Existing LLM inference systems commonly employ KV‑cache reuse to reduce time‑to‑first‑token (TTFT). Through our analysis of real workloads and system behavior, we identify two key observations that motivate a more flexible caching strategy.

\begin{figure}[t]
    \begin{minipage}{.51\linewidth} 
        \centering
        \includegraphics[width=\linewidth]{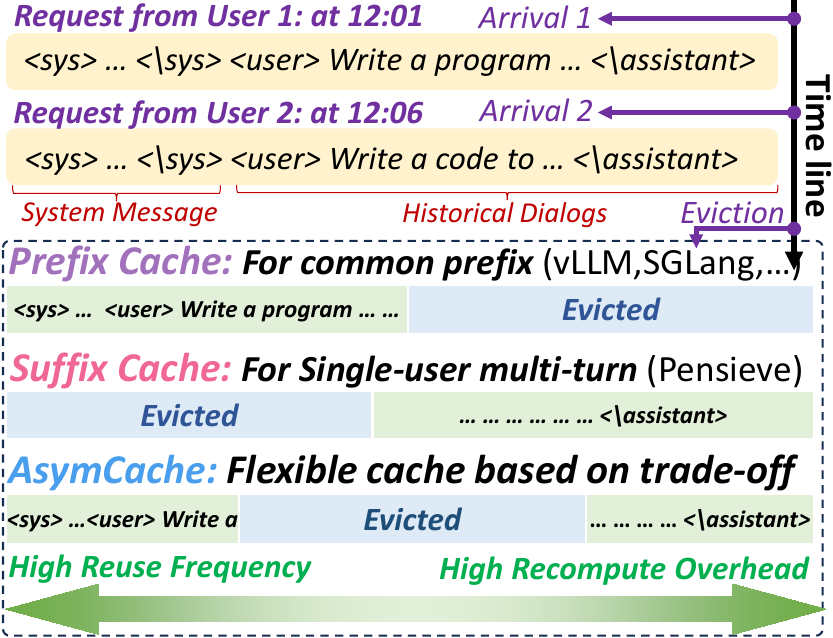} 
        \captionof{figure}{Comparison on KV cache eviction scheme.}
        \label{fig:rebuttal}
    \end{minipage}\hfill 
    \begin{minipage}{.475\linewidth} 
        \centering
        \includegraphics[width=\linewidth]{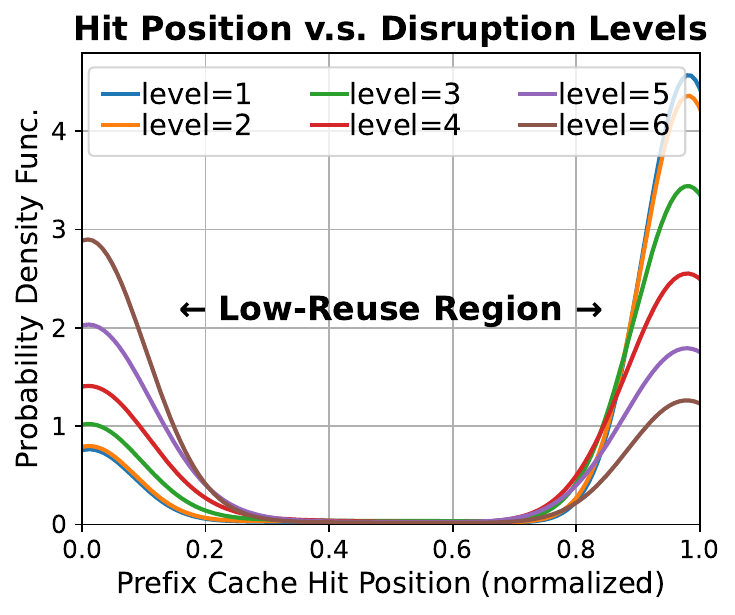}
        \captionof{figure}{PDF of normalized hit position under different disruption levels.}
        \label{fig:kde}
    \end{minipage}
\end{figure}

\paragraph{Observation 1: Both prefixes and suffixes are worth caching.}
Many systems, such as vLLM~\cite{kwon2023efficient} and SGLang~\cite{zheng2024sglang}, implement prefix caching, which reuses KV blocks across different requests that share common initial text (e.g., system messages or dialogue templates). This cross‑request sharing yields high hit rates and eliminates recomputation for shared prefixes. On the other hand, Pensieve~\cite{yu2025stateful} targets single‑user multi‑turn dialogues where cross‑request prefix sharing does not occur. In that scenario, caching suffix tokens is beneficial because the quadratic complexity of self‑attention makes recomputing later positions increasingly expensive. Pensieve shows that suffix‑only caching can be optimal in its specific setting. However, in general‑purpose serving environments, both patterns coexist: prefixes are shared across independent requests, while suffixes are reused within long‑running sessions. Therefore, an effective cache management policy is supposed to retain both: prefix blocks that serve many requests, and suffix blocks that are costly to recompute, rather than committing to a single strategy.

\paragraph{Observation 2: Cache hits tend to cluster near the two ends of the sequence.} 
Analyzing multi‑turn traces (Figure~\ref{fig:kde}), we observe that a cache hit often falls into one of two patterns.
One pattern is a short prefix shared across many requests, such as system messages or dialogue templates. The other pattern is an almost complete sequence, typically when a long conversation history is reused in subsequent turns.
The middle portion of the sequence accounts for comparatively fewer cache hits. This bimodal distribution suggests that evicting middle blocks while retaining tail blocks sacrifices only limited hit rate but achieves substantially lower recomputation cost.
For this particular workload, an ideal eviction policy could be to drop the middle and preserve both ends. However, real world workloads can exhibit more complex reuse patterns, sometimes with multiple distinct peaks or irregular distributions.
A cache management policy that sticks to a single fixed heuristic is unlikely to be optimal across all scenarios. Instead, an ideal eviction algorithm should be able to automatically identify the underlying cache reuse pattern of the current workload and adapt its strategy accordingly, whether that means preserving prefixes, suffixes, both ends, or other complicated structures.

These observations motivate a multi-segment caching strategy that prioritizes retaining tokens near the two ends of the sequence while selectively evicting less critical middle segments. 
This approach is shown to achieve significant performance gains with only marginal degradation in the overall hit rate.

Figure \ref{fig:overview} illustrates the overall architecture of \Sys. The system realizes this multi‑segment caching strategy through three core components: a Multi‑Segment Attention (MSA) kernel (\S\ref{subsec:msa}), a carefully designed block‑management algorithm (\S\ref{subsec:evictor}–\ref{subsec:freq}), and Chunking Scheduler (\S\ref{subsec:runtime}). 
When user submits a query, the query text may first be concatenated with relevant context (e.g., historical dialogue) and sent to the API Server (step $\textcircled{1}$). Upon being scheduled (step $\textcircled{2}$), \Sys checks its cache‑hit status and allocates cache blocks, a process that may trigger cache eviction. Due to the multi‑segment design, a request can have multiple discontiguous cache‑hit segments (step $\textcircled{3}$). The request is then batched with other running requests, and the Chunking Scheduler assigns a chunk size to each prefill request in the batch (step $\textcircled{4}$). Because of the non‑contiguous KV‑cache context, some prefill chunks may span already‑computed segments, as shown for Prefill Request 1 in Figure \ref{fig:overview} (step $\textcircled{5}$). Finally, based on this scheduling plan, the batch is dispatched to the inference engine for execution (step $\textcircled{6}$), and promote to next schedule step (step $\textcircled{7}$).

\begin{figure}[t]
    \centering
    \includegraphics[width=1.0\linewidth]{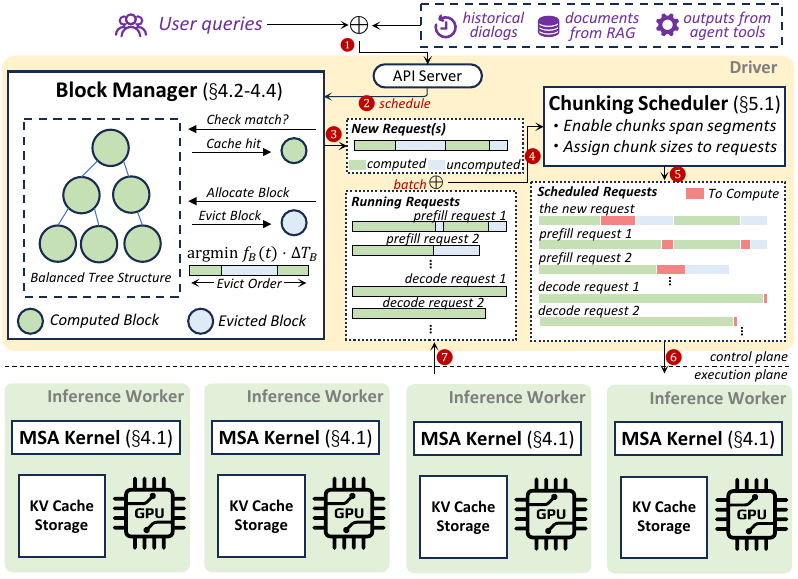}
    \caption{The overview of \Sys.}
    \label{fig:overview}
\end{figure}

%% file: chapters/design.tex
\section{Asymmetric Cache Block Manager}

\subsection{Multi-Segment Attention}
\label{subsec:msa}
Supporting non-contiguous cache segments (e.g., both a prefix and a suffix) requires the attention kernel to handle disjoint KV regions. 
This motivates our design of Multi-Segment Attention (MSA).
In contrast to prefix caching strategy which only caches prefix and prioritizes the blocks with longer prefix to be evicted, MSA allows multiple cache segments in latter position for less computational overhead. 
It can be considered as a combination of multiple individual self-attention. For a two-segment instance, letting $h_1$ and $h_2$ denote the hidden states of these two segments respectively, the compute procedure can be formulated as:

\begin{equation}
\begin{gathered}
    H=Concat(h_1,h_2) \\
    Q=W_Q\cdot H, K=W_K\cdot H, V=W_V\cdot H \\
    Q_1=Q[:len(h_1)],K_1=K[:len(h_1)],Q_2=Q[len(h_1):]\\
    O_1=Attn(Q_1,K_1,V), O_2=Attn(Q_2,K,V)\\
    O=Concat(O_1,O_2)
\end{gathered}
\end{equation}

One straightforward way to compute MSA is utilizing existing attention kernel implementations and dispatching each segment to separate attention kernel calls. 
Several LLM serving systems prefer this method for computing attention in hybrid batches~\cite{yu2022orca,kwon2023efficient}.
However, separate kernel calls introduce additional kernel launch and host synchronization overheads, leading to sub-optimal performance.
Some alternative solutions, such as multi-streams, CUDA GreenContext~\cite{chen2026towards}, and kernel fusion, also face performance challenges like resource contention and stragglers.

To achieve maximum GPU resource utilization, we implement a single GPU kernel that efficiently computes multiple segments' attention simultaneously.
In particular, we fuse computation along the Cooperative Thread Array (CTA) dimension and each request segment is dispatched to a group of CTAs for better parallelism.
The GPU hardware will automatically assign CTAs to available Streaming Multiprocessors (SMs).
Note that, only the attention module requires the token dependency information, so the hidden states of two segments can directly be concatenated when computing MLP and LayerNorm, while the only change is passing their split indices to the Multi-Segment Attention kernel, maintaining the number of kernel calls and keeping the modification lightweight.

Figure \ref{fig:msa-layout} shows a batch consisting of two double-segment cached prefill requests. The MSA kernel dispatches different requests to their respective CTA groups for independent execution. Within each thread, the equivalent \texttt{seq\_len} of the token block it is responsible for is obtained via a precomputed array. Ultimately, parallel computation for multiple requests is achieved within a single kernel call, which can be extended to any number of non‑contiguous cache segments.

\begin{figure}[t]
    \centering
    \includegraphics[width=1.0\linewidth]{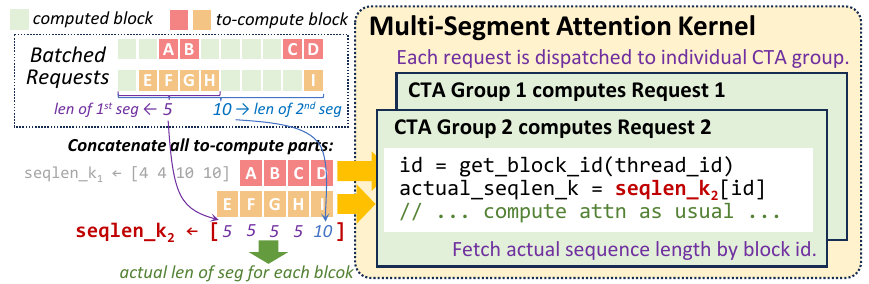}
    \caption{Schematic diagram of the Multi-Segment Attention Kernel operation. (An example for 2 prefill requests.)}
    \label{fig:msa-layout}
\end{figure}

\subsection{Computational-Aware Block Evictor}
\label{subsec:evictor}
Caching the blocks with longer prefix helps reduce the prefilling latency. However, since different requests may share a common prefix, caching these blocks in the latter position leads to hit rate loss.
We introduce the trade-off mechanism in \Sys, which takes both cache hit rate and asymmetric re-compute overhead into consideration, making the eviction policy expected-latency-aware that select the block with the least expected computational latency to evict. 
Specifically, \Sys incorporates the \textit{frequency} and the computational \textit{cost} values for each block.
The \textit{frequency} value of a block is its historical access frequency with exponential weight decay, as originally used in most Least Frequently Used (LFU) evict policy, while the block's \textit{cost} measures the additional prefilling time if this block is evicted.

Whenever scheduling a new request that exceeds the GPU memory constrains, the Block Evictor should select certain free blocks to evict and assign them to the coming request. Least Recently Used (LRU) or Least Frequently Used (LFU) policy, only considering the block accessing time, is applied in existing systems such as~\cite{kwon2023efficient,zheng2024sglang,qin2024mooncake}, choosing blocks with the least scores to evict. 
In \Sys, we extend LFU to incorporate the cost value of blocks.

We suppose $f_B(t)$ is the frequency value of KV block $B$ at time $t$, and $\Delta T_B$ is its cost. When the eviction operation is launched by scheduler, the Block Evictor is supposed to choose the cache block with the lowest expected recomputation cost for eviction. Based on the assumption that the probability of a cache block being used in the future is proportional to its historical usage frequency $f_B(t)$, this least-expected-latency selection criterion can be formulated as:
\begin{equation}
\label{eq:exp}
\mbox{block\_id} \gets \argmin_{B}~E(B,t)=\argmin_{B} f_B(t)~\cdot \Delta T_B,
\end{equation}

In Equation \ref{eq:exp},
\noindent where the frequency value $f_B(t)$ is generally a decreasing function with respect to the last access time $t$. If the recovery cost of all cache blocks is a uniform constant, our algorithm degrades to the conventional LRU strategy. Some eviction policies may also take computational information into account. For example, Pensieve~\cite{yu2025stateful} uses the historical usage frequency as the frequency value, similar to the conventional LFU algorithm, while a recent work~\cite{wang2025kvcache} introduces a new approach that combines the last access time and the lifespan to estimate the priority of eviction. 
However, the definitions of the frequency value $f_B(t)$ in these algorithms are straight-forward, causing the 
linear-time complexity when calculating the priority of each cached block and selecting the block to be evicted. 
This is highly inefficient compared to the $O(1)$ complexity of the LRU algorithm. Our system adopts a piecewise exponential function scheme to define $f_B(t)$, which ensures that both the recent access time and lifespan of each cached block are taken into account while also achieving fast runtime performance. 
We will discuss this scheme in detail in \S\ref{subsec:freq}.

The recomputational cost term $\Delta T_B$ in Equation \ref{eq:exp} is what distinguishes our algorithm from traditional LRU
approaches. According to \textit{Observation 1} in \S\ref{sec:motivation}, different from treating each cache block equivalently, the $\Delta T_B$ is positively correlated to the block's predecessor length. 
On the one hand, for cache blocks with identical frequency values, those associated with lower recomputation costs are prioritized for eviction. This design inherently favors the retention of blocks located at later positions in the request sequence, which typically incur greater computational overhead. 
On the other hand, cache blocks with higher access frequencies are also more likely to be retained, as these often constitute shared prefixes common across multiple requests. Consequently, a single request sequence may end up hitting multiple (typically two) non-contiguous segments of the KV context. 
This is precisely where our proposed Multi-Segment Attention (MSA) mechanism is effectively leveraged.

\subsection{Cost Model Estimation}
In practical operation, the system must respond to cache eviction requests and make eviction decisions within extremely short time frames. Consequently, the estimation of recomputation cost cannot be overly complex. \Sys employs a linear model to approximate this cost effectively.
Although prefill requests may encounter non-contiguous KV contexts, resulting in the segmentation of query tokens into discontinuous segments, the non-attention components of the model (such as MLP and LayerNorm) process each query token in parallel and independently, without modeling inter-token relationships. Consequently, the inference time of these modules
scales linearly with the total number of query tokens in the request. 
For the attention module, the multiplicative operations between query and key matrices introduce a quadratic term into its computational complexity. 
In summary, taking the two-segment cache scenario as an example, the inference latency can be linearly modeled as follows:
\begin{multline}
T(l_1,q_1,l_2,q_2) = k_1\cdot l_1 + k_2\cdot q_1 + k_3\cdot l_2 + k_4\cdot q_2 \\
 + k_5\cdot q_1 (l_1+q_1) + k_6\cdot q_2 (l_1+q_1+l_2+q_2)+ \beta
\end{multline}
When the eviction policy selects the first cache block in the second segment of the KV context (denoted by the red block in Figure \ref{fig:delta-t}), its recomputation cost is given by:
\begin{equation}
\begin{aligned}
\label{eq:deltat}
\Delta T_B&=T(l_1,q_1+1,l_2-1,q_2) - T(l_1,q_1,l_2,q_2)\\
& = k_5\cdot(l_1 + 2q_1) +(k_2-k_3+k_5)
\end{aligned}
\end{equation}
However, maintaining the term $(l_1 + 2q_1)$ in Equation \ref{eq:deltat} would require the introduction of complex data structures and involve update or query operations with super-constant time complexity, becoming unaffordable in online serving. 
Therefore, the following approximated model is applied instead:
\begin{multline}
T(l_1,q_1,l_2,q_2) = k_1\cdot l_1 + k_2\cdot q_1 + k_3\cdot l_2 + k_4\cdot q_2 \\
 + k_5\cdot (l_1+q_1)^2 + k_6\cdot q_2(l_1+q_1+l_2+q_2)+ \beta
\end{multline}
Then, the recomputation cost becomes:
\begin{equation}
\label{eq:deltat2}
\Delta T_B=2k_5\cdot(l_1 + q_1) +(k_2-k_3+k_5)
\end{equation}
The term $(l_1 + q_1)$ corresponds to the immutable positional index of the cache block within the request sequence (i.e., the number of preceding blocks), which can be retrieved in constant time. The approximated models were fitted using approximately 1.1K real-world performance profiling instances under various configurations. All entries achieve notably high correlation coefficients $R^2>0.999$, validating the soundness of our design.

\begin{figure}[t]
    \centering
    \includegraphics[width=0.85\linewidth]{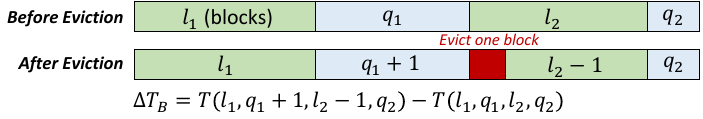}
    \caption{Estimation of $\Delta T_B$.}
    \label{fig:delta-t}
\end{figure}

\subsection{Frequency Value Definition}
\label{subsec:freq}
In Equation \ref{eq:exp}, the frequency term \(f_B(t)\) assigns a weight to each cache block based on historical access information. Cache blocks with smaller weights are prioritized for eviction by the cache eviction mechanism. Consequently, \(f_B(t)\) is generally 
correlated with the historical access frequency and 
the last access time. A straightforward approach is to directly set it as the historical access frequency, aligning with prior work such as Pensieve~\cite{yu2025stateful}. However, this leads to suboptimal time efficiency.

\textit{Requirement 1: Order-Preserving Rule.} 
The cache eviction algorithm needs to calculate the weight for each cache block according to Equation \ref{eq:exp} and select the block with the smallest weight for the eviction operation (\textit{evict}). The system must maintain a data structure that stores this weight for each cached block. Furthermore, this data structure must efficiently support two other operations: inserting a new block (\textit{insert}), removing a block when it is accessed (\textit{remove}), and identifying the block with the minimum weight for eviction. 
A balanced tree can perform all three operations in logarithmic time, avoiding the need for linear time complexity. However, this choice imposes an order-preserving rule on the weight values. 
For any cache blocks $B_1$ and $B_2$ and any time points $t_1$ and $t_2$, the order relation between their weights should remain consistent:
\begin{equation}
\label{eq:order}
\begin{aligned}
& \left(f_{B_1}(t_1)\cdot \Delta T_{B_1} - f_{B_2}(t_1)\cdot \Delta T_{B_2}\right)\cdot \\
& ~~~~\left(f_{B_1}(t_2)\cdot \Delta T_{B_1} - f_{B_2}(t_2)\cdot \Delta T_{B_2}\right)\geq 0
\end{aligned}
\end{equation}
The equality holds if and only if both terms in the product are zero. If this order-preserving rule is violated, the partial ordering among elements in the balanced tree may change over time, rendering the logarithmic-time algorithm inoperative. It can be verified that using the historical access frequency as the frequency term violates this property, which would degrade the time complexity of the eviction algorithm to linear complexity. Given that the frequency value is monotone and non‑negative, it can be proved that only exponential functions satisfy the above property. (See our artifact for proof).

\begin{figure}[t]
    \centering
    \includegraphics[width=1.0\linewidth]{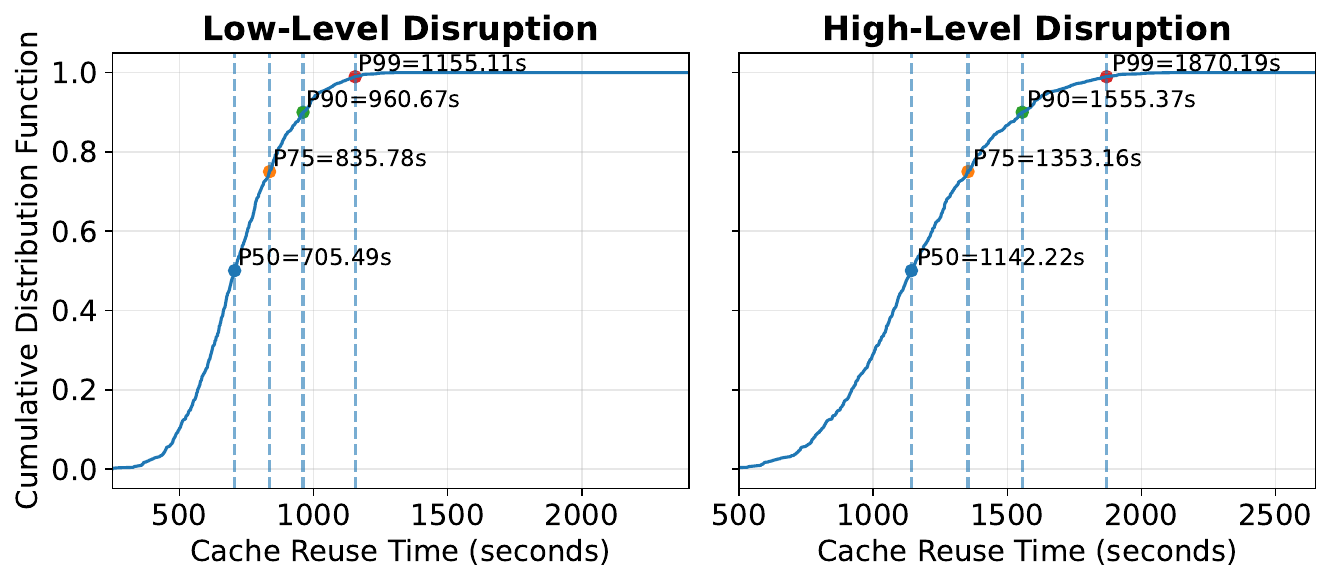}
    \caption{KV-Cache reusing time distribution of LooGLE~\cite{li2024loogle} dataset in different disruption level.}
    \label{fig:cdf}
\end{figure}

\textit{Requirement 2: Alignment with the Reuse Pattern.}
As illustrated in Figure \ref{fig:cdf}, most requests still have a high probability of being reused in a certain period of time after their initial access. In a multi-turn dialogue, for example, a user may continue the conversation with a follow-up request within a short period after receiving the response to the previous question. It is only after a certain time threshold that the reuse frequency of a cache block drops significantly. This threshold can be regarded as the cache block's \textit{lifespan}, a concept also noted in existent works~\cite{wang2025kvcache,li2025continuum}. The frequency value function should align with such a pattern.

\begin{figure}[t]
    \begin{minipage}{.51\linewidth} 
        \centering
        \includegraphics[width=\linewidth]{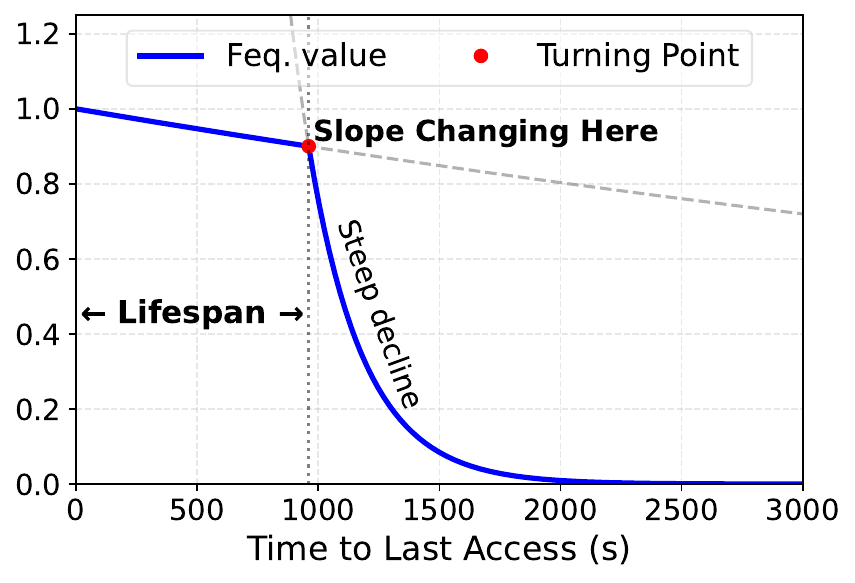} 
        \captionof{figure}{Image of piecewise exponential function.} 
        \label{fig:left}
    \end{minipage}\hfill 
    \begin{minipage}{.475\linewidth} 
        \centering
        \includegraphics[width=\linewidth]{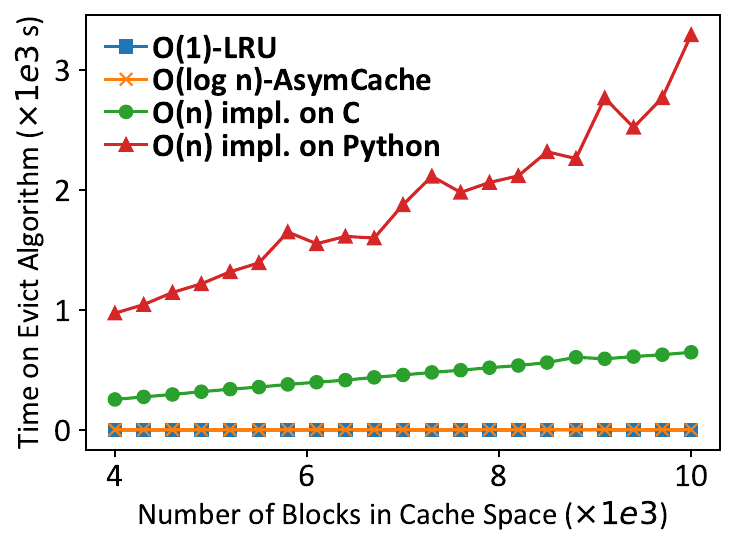}
        \captionof{figure}{Performance of eviction algorithms}
        \label{fig:right}
    \end{minipage}
\end{figure}

\begin{algorithm}[t]
\small
\begin{algorithmic}[1]
\State Initialize two balanced trees $bt_1,~bt_2$.
\LineComment{Called when the ref-count of Block $B$ becomes zero.}
\Function{Add}{$B$}
    \State $B.w_1\gets\exp\left(-\tau_B(t)/\alpha\right)\cdot\Delta T_B$
    \State $B.w_2\gets\exp\left(-(\tau_B(t)-\tau_0)/\beta\right)\cdot\Delta T_B$
    \State $bt_i.\textsc{insert}(\{B.\texttt{id}, B.w_i\}),~\text{for all }i\in\{1,2\}$
\EndFunction
\LineComment{Called when Block Manager requests a new block.}
\Function{Evict}{}
    \State $B_1,~B_2 \gets bt_1.\textsc{Get\_min()},~bt_2.\textsc{Get\_min()}$
    \State $i \gets \argmin \{B_1.w_1, \lambda \times B_2.w_2\}$
    \State $\textsc{Remove}(B_i)$ and then \Return $B_i$
\EndFunction
\LineComment{Called when Block $B$ is hit or from the \textsc{Evict} function.}
\Function{Remove}{$B$}
    \State $bt_i.\textsc{delete}(\{B.\texttt{id}, B.w_i\}),~\text{for all }~i\in\{1,2\}$
\EndFunction

\end{algorithmic}
\caption{Computational-aware block evictor.}
\label{algo:evictor}
\end{algorithm}

\textit{Solution: Piecewise Exponential Function.}
\Sys employs a piecewise exponential function as the frequency term to better model the reuse pattern. The first segment fits the period of high reuse probability within
lifespan, while the second segment captures the rapid decline thereafter. This can be formulated as:
\begin{equation}
\label{eq:pw-score}
f_{B}(t) := \min\left(\exp\left(-\tau_B(t)/{\alpha}\right),~\exp\left(-(\tau_B(t)-\tau_0)/{\beta}\right)\right)
\end{equation}
Here, the hyper-parameters $\alpha$ and $\beta$ control the bases (and thus the decay rates) of the respective exponential functions. The hyperparameter $\tau_0$ controls the horizontal shift of the second segment. 
Figure \ref{fig:left} shows the sketch of $f_B(t)$ defined by Equation \ref{eq:pw-score}. 
The frequency value keeps high during the lifespan, while exhibits a steep decline beyond the period.
These parameters can be uniquely determined by setting the coordinate of the \textit{turning point} (e.g., at the P99 point of the cumulative distribution function) and specifying the \textit{slope changing ratio} (e.g., 40 in our evaluation) there. Experiments in \S\ref{subsec:hyper-exp} show that the system's overall performance is not highly sensitive to these hyper-parameters over a broad range.

Although the piecewise function as a whole violates the order-preserving rule specified in Equation \ref{eq:order}, each individual segment is an exponential function and thus satisfies this property. Consequently, in the actual implementation of the system, two separate balanced trees can be used to maintain the values for each segment. The algorithm \ref{algo:evictor} shows the overall workflow for cache management, with all operations within logarithmic time complexity, preserving the overall efficiency of the algorithm.
In Line 8 of Algorithm~\ref{algo:evictor}, we introduce a coefficient $\lambda$ that is initially set to 1, making the eviction criterion identical to the original frequency value definition. By adjusting $\lambda$ online, the effective turning point of the piecewise exponential function can be shifted, enabling \Sys to adapt to workloads whose lifespan characteristics vary over time. Further details are presented in Section~\ref{subsec:runtime}.

\subsection{Time Complexity}
\label{subsec:time_complex}
The time complexity of the eviction algorithm is dominated by the operation defined in Equation~\ref{eq:exp}. Both the frequency values and the recomputation costs of cache blocks can be retrieved in $O(1)$ time. Moreover, our carefully designed weight formulation enables balanced-tree optimization, ultimately bounding the overall time complexity of the cache management module to $O(\log n)$, exhibiting clear performance advantages compared with existing $O(n)$ cache eviction algorithms. 
Such a linear overhead becomes prohibitive as cache size grows. As depicted in Figure~\ref{fig:right}, when serving \(\sim\)2.4~K requests with 8~K cache blocks, an \(O(n)\) algorithm (implemented in C) consumed \(\sim\)500~seconds of control‑plane time, (i.e., roughly 200~ms per request), or nearly 10\% of the prefill latency. Such overhead may be undesirable in an online serving system.

\Sys's logarithmic‑time eviction algorithm enabled by the carefully designed frequency function that satisfies the order preserving rule (\S\ref{subsec:freq}). Under this rule, the relative ordering of weights between any two blocks remains invariant over time, allowing us to maintain all cached blocks in a balanced tree keyed by their current weight. Insertion, deletion, and finding the minimum weight block each take \(O(\log n)\) time. In practice, this reduces the total eviction overhead to merely \(\sim\)5~seconds for the same 2.4~K requests, i.e., approximately 2~ms per request, or 0.1\% of prefill time, matching the efficiency of classical LRU. More importantly, the logarithmic complexity scales gracefully to larger cache configurations (e.g., \(>\)100~K blocks, which may appear when offloading KV caches to CPU memory), whereas linear overhead would become completely untenable. We emphasize that this algorithmic choice keeps control‑plane overhead negligible, allowing the eviction policy’s computational‑awareness to deliver its latency benefits without becoming a bottleneck itself.

%% file: chapters/scheduler-new.tex
\section{\Sys Runtime}
This section describes the runtime of \Sys, emphasizing how its components collectively enable the expected-latency-aware KV cache management. We present the runtime mechanisms that support non‑contiguous context with adaptive chunking and enable adjusting lifespan value online (\S\ref{subsec:runtime}), then discuss how \Sys adapts to representative long‑context workloads (\S\ref{subsec:application}). Finally, we detail the implementation of the inference engine and its integration with the rest of the system (\S\ref{subsec:impl}).

\subsection{Runtime Design}
\label{subsec:runtime}
\paragraph{Adaptive Chunking.} 
\Sys adopts a chunked‑prefill~\cite{agrawal2024taming,holmes2024deepspeed} scheduler that splits long prefill requests into smaller chunks, preventing head‑of‑line blocking for concurrent decode requests. Unlike conventional prefix caching where each chunk corresponds to a contiguous prefix of the KV context, our multi‑segment caching policy (\S\ref{subsec:msa}) may cause a single prefill chunk to start in a region that needs computation, pass through a cached segment, then enter a second region that also requires computation (e.g., the prefill request 1 in Figure~\ref{fig:overview}), and in general include any number of alternating compute and cached segments. The MSA kernel natively supports such non‑contiguous attention computation, allowing the scheduler to treat these hybrid chunks uniformly.
When decode requests are sufficiently numerous (i.e., their count exceeds a configurable threshold), the chunking scheduler deliberately reduces the chunk size for ongoing prefill requests. Because prefill is compute‑intensive, the total latency of processing all chunks of a prefill request remains largely unchanged after chunk size reduction. However, each smaller chunk executes faster, thereby lowering their time‑per‑output‑token. To avoid under-utilization of GPU compute resources, the scheduler enforces a lower bound on chunk size. The effectiveness of this adaptive strategy is most pronounced under high‑load conditions (see \S\ref{subsec:e2e}).

\paragraph{Online Lifespan Setting.}

The piecewise exponential frequency function described in \S\ref{subsec:freq} requires a fixed turning point (the lifespan) and a slope change ratio, both determined before system deployment. While these parameters work well for workloads with stable reuse patterns, they become suboptimal when the average lifespan of cache blocks changes significantly over time.
For example, during a sudden burst of high load, requests may be temporarily queued, introducing extra waiting time before they are processed.
To address this limitation, \Sys introduces an online adjustable coefficient \(\lambda\) in the eviction criterion (Line 8 of Algorithm~\ref{algo:evictor}). The original eviction weight is \(f_B(t) \cdot \Delta T_B\); we modify it to \(\lambda \cdot f_B(t) \cdot \Delta T_B\) with \(\lambda\) initially set to 1. Changing \(\lambda\) online equivalently rescales the frequency term, which shifts the effective turning point of the piecewise exponential function. In practice, \Sys can periodically collect the average lifespan $\tau$ of cache blocks from a sliding window and updates \(\lambda\) according to the following rule, adjusting the turning point to the detected lifespan:
\begin{equation}
\lambda_{new} \gets \exp\left( (\tau - \tau_0)/{\beta} - {\tau}/{\alpha} \right).
\end{equation}
The key advantage of this design is its zero overhead. The coefficient \(\lambda\) is a scalar multiplier applied during weight computation. Modifying it does not require any change to the balanced tree structure or the \(O(\log n)\) eviction operations. The new \(\lambda\) takes effect immediately in the next eviction decision, seamlessly adapting the policy without additional control plane cost. This lightweight adaptation mechanism enhances \Sys's robustness against workload dynamics, providing flexibility beyond fixed parameter settings.

\subsection{Applications}
\Sys exploits the inherent asymmetry in recomputation costs across sequence positions. While our design is general purpose and applies to any LLM serving workload, the benefits are most pronounced when context lengths are long. 
Two representative scenarios that exhibit these characteristics are multi‑turn conversation and agentic workload (Figure~\ref{fig:workload-cmp}), where long histories and frequent context reuse are common.

\paragraph{Multi-turn Conversation.}

In multi‑turn dialogue workloads, each new request appends the entire conversation history to the input. As the number of turns grows, the context length can become extremely long, while the prefix (e.g., system prompts or dialogue templates) is often shared across requests from different users. This yields two complementary caching opportunities. First, caching the shared prefix across requests reduces prefill latency for many sessions simultaneously. Second, due to the long context, retaining suffix blocks near the end of the sequence carries high recomputation savings due to the quadratic cost of attention. However, the inter‑turn arrival pattern is irregular: whether a subsequent turn will occur is uncertain, and the time between consecutive turns of the same conversation follows a random distribution and relatively longer. Therefore, the effective lifespan of a cache block tends to be relatively long, and usually with higher dispersion. \Sys accommodates this by setting the turning point of the piecewise exponential frequency function to a conservative P99 value of the observed lifespan distribution (as stated in Figure~\ref{fig:cdf} and \S\ref{subsec:freq}). This choice ensures that blocks with moderately old but still reusable history are not evicted prematurely, while blocks that have truly become stale are quickly de-prioritized.

\label{subsec:application}
\begin{figure}[t]
    \centering
    \includegraphics[width=0.98\linewidth]{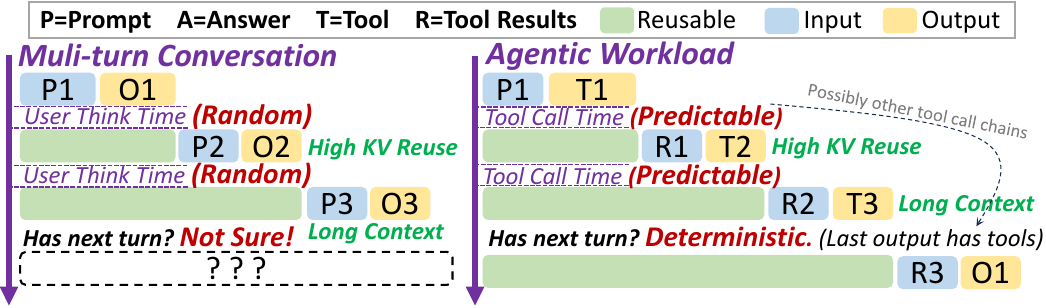}
    \caption{Comparison of two types of workloads.}
    \label{fig:workload-cmp}
\end{figure}

\paragraph{Agentic Workload.}
Agentic workloads differ from multi‑turn conversations in several key aspects. While both involve long contexts and exhibit high cache reuse, agent sessions are driven by tool calling workflows. The model iteratively produces tool invocations, receives external results, and continues generation. This leads to two distinctive characteristics. First, the intervals between consecutive turns are typically shorter and more predictable than in human‑driven dialogues. Second, whether a subsequent turn will occur is often deterministic based on the presence of a tool call in the model output. Consequently, the effective lifespan of cache blocks in agent workloads is generally shorter and less variable.
\Sys adapts to these characteristics in two ways. 
First, the piecewise exponential frequency function uses a turning point with shorter lifespan to reflect the tighter reuse window. 
Second, depending on whether the model output contains a tool call, a correction factor will be applied to the frequency value. This strongly discourages eviction of blocks that are likely to be reused immediately after the tool returns, preserving the KV-Cache of the current turn.
Importantly, \Sys operates at the granularity of individual cache blocks and focuses on the eviction order within a request. 
This is orthogonal to existing agent serving systems that optimize at the scheduling layer (e.g., Autellix~\cite{luo2025autellix}, Tokencake~\cite{bian2025tokencake}, ThunderAgent~\cite{kang2026thunderagent}) or manage KV caches at the request level (e.g., InferCept~\cite{abhyankar2024infercept}, Continuum~\cite{li2025continuum}, KVFlow~\cite{pan2026kvflow}). Our block‑level, position‑aware eviction can be seamlessly integrated into such systems. In our evaluation (\S\ref{subsec:agent-exp}), we select Continuum, which also optimizes KV‑cache management but specifically for agent scenarios, and implement \Sys on top of it. The results show that integrating our approach yields additional performance gains beyond what Continuum alone achieves.

\subsection{Implementation}
\label{subsec:impl}
We implement \Sys by extending vLLM~\cite{kwon2023efficient}, with approximately 6K lines of code (LoC) modified in Python and 2K LoC in C++/CUDA. The proposed Multi-Segment Attention (MSA) mechanism is built upon CUDA~\cite{nvidia_cuda_2024} and the CUTLASS~\cite{nvidia_cutlass_2024} library, extending FlashAttention and PagedAttention to support non-contiguous KV Cache contexts.
To maximize the performance of the cache eviction algorithm, we implement it in C++. 
In contrast, the logic for matching requests to multi-segment caches and the generation of metadata for the attention kernels remain on the Python side. 
This is because these components require extensive interaction with other system modules, such as the scheduler and the Block Manager, thus coupled with vLLM's original architecture.

%% file: chapters/evaluation.tex
\begin{table}[t]
\centering
\small
\begin{tabular}{c|cccc}
    \hline
    \toprule
    Model & \#Layer & Hidden & \#GPU & Cache Space\\
    \midrule
    Llama 3.1‑8B & 32 & 4096 & 1 & 487,744 tokens \\
    Llama 3.1‑70B & 80 & 8192 & 4 & 505,152 tokens \\
    \bottomrule
\end{tabular}
\caption{The metadata and cache space for each model.}
\label{tab:models}
\end{table}

\begin{figure*}[pt]
    \centering
    \includegraphics[width=1.0\linewidth]{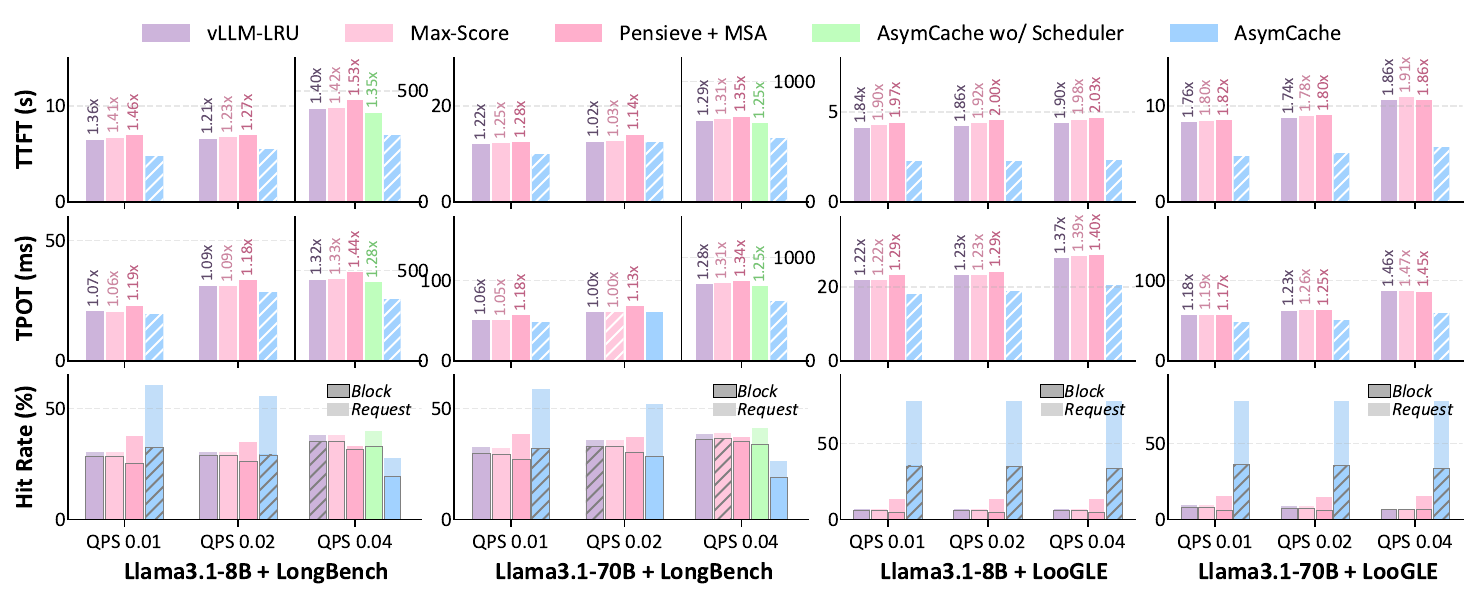}
    \caption{End-to-end results on Low-Dispersion Workloads. The best performance is denoted by the bars with diagonal slashes.}
    \label{fig:e2e-f5}
\end{figure*}
\section{Evaluation}

\subsection{Experimental Setup}
\paragraph{System Environment.}
We evaluate \Sys on a high performance server equipped with an AMD EPYC 9K84 96‑core CPU and 4× NVIDIA H20 GPUs, each with 96 GB of device memory. The GPUs are connected via NVLink to enable high‑bandwidth peer‑to‑peer communication. The software uses CUDA 12.8 and runs in a container without explicit CPU or memory constraints.

\paragraph{Models.}
We employ two open‑source models of different scales: Llama 3.1‑8B-Instruct and Llama 3.1‑70B-Instruct~\cite{dubey2024llama}. Both models adopt Grouped‑Query Attention (GQA)~\cite{ainslie2023gqa}, which reduces the number of KV heads and thereby significantly decreases the memory footprint of the KV cache. This allows \Sys to allocate more cache space for storing historical context. For the larger 70B model, we apply Tensor Parallelism~\cite{shoeybi2019megatron} to distribute it across multiple GPUs. Table~\ref{tab:models} lists the metadata of each model and the corresponding cache space sizes used in the experiments. 

\paragraph{Dataset.}
We evaluate \Sys on two long‑context datasets: LongBench~\cite{bai2024longbench} and LooGLE~\cite{li2024loogle}. Both of them are benchmark for long‑context understanding, containing multiple question answering (QA) pairs. We construct multi-turn conversation requests based on their QA pairs, similar to prior works~\cite{li2025loopserve}. To eliminate randomness from token sampling, we first generate the model output and then rewrite each output token in every decoding step, ensuring the output length remains consistent across all experimental runs. In our experiments, each dataset is limited to 300 requests, as the system behavior observed on a representative subset closely matches that on the full dataset, while significantly reducing the required time. The average input/output lengths are approximately 34.8K/2.6K for Longbench and 24.4K/0.7K for LooGLE.

\paragraph{Workloads.}
The arrival time of the first turn in each conversation session follows a Gamma distribution with a coefficient of variation (CV) of 0.25. Subsequently, the intervals between consecutive turns within the same session are independently sampled from another Gamma process, simulating user thinking and query formulation time. The arrival rates for inter‑session and intra‑session requests can differ, and their ratio can be varied. A higher inter‑session arrival rate relative to the intra‑session rate implies that more requests from other conversations are interleaved between two consecutive turns of the same dialogue, leading to a more dispersed request pattern and lower cache hit rates. We test two ratios, 5:1 and 10:1, to evaluate system performance under Low‑Dispersion and High‑Dispersion scenarios, respectively.

\paragraph{Metrics.} 
We use the average \textbf{Time-to-first-token (TTFT)}, \textbf{Time-per-output-token (TPOT)} as the primary metrics. To better understand the underlying mechanism, we also report cache \textbf{Hit Rates} at both the request and block granularities.

\paragraph{Baselines.}
We compare \Sys with several serving systems employing different eviction policies. 
\begin{itemize}
\item \textbf{vLLM-LRU:} A widely‑used LLM serving system built on PagedAttention~\cite{kwon2023efficient}. With prefix caching enabled, it employs the classical LRU eviction policy at the granularity of cache blocks to manage the KV cache.
\item \textbf{Max-score:} This category refers to eviction policies that compute a priority score for each cache block and evict the block with the maximum score. We implement the algorithm from~\cite{wang2025kvcache}, which estimates the reuse probability to calculate the score, and the victim block is selected in \(O(n)\) time. We use Equation \ref{eq:pw-score} to estimate the probability.
\item \textbf{Pensieve+MSA:} Pensieve~\cite{yu2025stateful} is a suffix‑caching system originally designed for single‑user multi‑turn dialogues. As it is not open‑source, we re‑implement its algorithm and further extend it with our MSA kernel to support general workloads, since Pensieve's original kernel only supports single reuse segment. Its frequency function uses an inverse‑proportional scheme, violating the order‑preserving rule (\S\ref{subsec:freq}) and resulting in linear-time eviction complexity.
\end{itemize}
For all evaluated systems, we integrate POD‑Attention~\cite{kamath2025pod}, an efficient attention kernel 
enabling a single kernel to process batches containing mixed chunked prefill and decode requests, ensuring a consistent and optimized backend across all baselines.

\begin{figure*}[tp]
    \centering
    \includegraphics[width=1.0\linewidth]{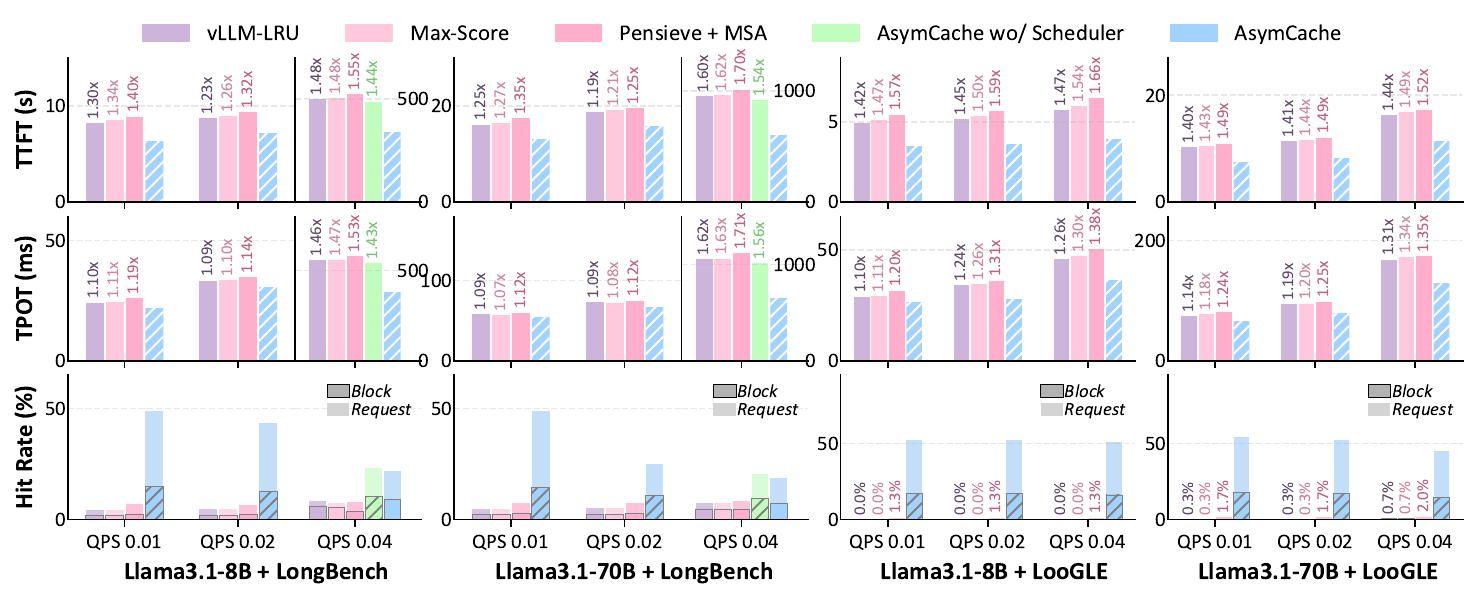}
    \caption{End-to-end results on High-Dispersion Workloads. The best performance is denoted by the bars with slashes.}
    \label{fig:e2e-f10}
\end{figure*}
\subsection{End-to-end Performance}
\label{subsec:e2e}
Figure \ref{fig:e2e-f5} and Figure \ref{fig:e2e-f10} present the end‑to‑end performance of different systems under low‑ and high‑dispersion workloads, using Llama‑3.1 8B and Llama‑3.1 70B models on the LongBench and LooGLE datasets. 
Across nearly all configurations, \Sys delivers the best TTFT and TPOT. For example, with the 70B model, \Sys reduces the TTFT over vLLM‑LRU, Max‑Score, and Pensieve+MSA by up to $1.86$×, $1.91$×, and $1.86$×, respectively; corresponding TPOT reductions reach $1.62$×, $1.63$×, and $1.71$×.

For most experiment groups (where requests do not queue), the performance gains of \Sys stem primarily from the Asymmetric Cache Block Manager. The baseline vLLM‑LRU and Max‑Score policies aim to maximize cache hit rate but ignore the asymmetric recomputation cost of tokens. Pensieve+MSA prioritizes retaining later tokens to reduce recomputation but sacrifices prefix‑hit opportunities. In contrast, \Sys holistically balances reuse probability with position‑aware recomputation savings, evicting blocks that yield the lowest expected benefit. This fundamental design leads to robust performance across varied workloads.

Low‑Dispersion workload (Figure \ref{fig:e2e-f5}) favors traditional policies: vLLM‑LRU and Max‑Score achieve high hit rates on the LongBench dataset (e.g. achieve the highest block-level cache rate 33\% on the 70B model). 
While \Sys maintains comparable or slightly lower hit rates, its ability to retain later, high‑cost tokens reduces per‑request computation, resulting in reduced latency (up to $1.36$× TTFT reduction over vLLM-LRU). 
On the LooGLE dataset, LRU and Max‑Score exhibit significant performance jitter as load increases, causing hit‑rate fluctuations ($<5\%$). \Sys mitigates this by deliberately caching latter tokens, which carry higher recomputation penalties. 
Under High‑Dispersion workload, the hit‑rate advantage of LRU and Max‑Score furthur diminishes, and their jitter becomes more pronounced (even becomes 0\% on LooGLE dataset). The hit rates of \Sys and Pensieve+MSA become higher, and \Sys consistently delivers lower latency (by up to 14\%) than Pensieve+MSA, attributing to its more efficient logarithmic‑time eviction algorithm.

Pensieve+MSA resembles \Sys in allowing suffix caching but uses an inverse‑proportional frequency function that violates the order‑preserving rule, leading to linear‑time eviction. This function poorly reflects access patterns: it prematurely evicts reusable blocks (fast initial decay) and retains stale ones, yielding no consistent block‑level hit‑rate advantage, and only marginal gains when hit rates are already high (e.g., low dispersion workload with 70B model and LooGLE, $qps=0.04$). Even when hit rates are similar, its linear overhead masks performance benefits. \Sys’s piecewise exponential design fits actual access patterns and maintains low overhead, making its latency advantage clear and consistent.

Running LongBench dataset under $qps=0.04$ induces request queuing. In this regime, we compare \Sys with \Sys wo/ Scheduler. Without the scheduler, \Sys still outperforms other baselines, confirming the intrinsic benefit of its cache manager. With the chunk scheduler enabled, \Sys dynamically selects smaller prefill chunks improving GPU utilization and allows more concurrent requests to be scheduled. This amplifies the performance advantage, yielding TTFT cuts of up to 60\% over vLLM‑LRU and 70\% over Pensieve+MSA, and TPOT cuts of up to 62\% and 71\%, respectively.

The consistent superiority of \Sys validates its core design principle: a practical eviction policy must jointly optimize the cache hit rate and the asymmetric recomputation cost. 

\subsection{Ablation Study}
\label{subsec:ablation}
Since the effects of Adaptive Chunking Scheduler have been discussed in \S\ref{subsec:e2e}, we further explore the effects of the components in Asymmetric Cache Block Manager.

\paragraph{Effects of the logarithmic-time algorithm}
As described in \S\ref{subsec:freq}, the frequency term is defined as a piecewise exponential function to enable a cache‑management algorithm with logarithmic time complexity. Here we evaluate the performance gain attributable to this design by comparing it against a baseline version that uses a straightforward linear‑time eviction policy. Table \ref{tab:on-ablation} presents the results on the Llama 3.1-8B model, showing the performance of: \Sys equipped with logarithmic/linear time eviction algorithm, and vLLM‑LRU.
Under the low‑dispersion workload, the linear‑time variant of \Sys already outperforms vLLM‑LRU, reducing TTFT by 15.2\% and TPOT by 6.6\% due to its more intelligent, reuse‑aware caching policy. Enabling the logarithmic‑time algorithm further decreases these gains by 4.7\% (TTFT) and 2.5\% (TPOT). 
A consistent trend is observed under high dispersion: the cache‑management policy itself brings initial decreases of 17.0\% in TTFT and 6.7\% in TPOT, while the logarithmic‑time optimization adds further decreases of 5.2\% and 3.2\%, respectively. 

\begin{table}[t]
\centering
\small
\begin{tabular}{c|c|ccc}
    \toprule
    Dispersion & Approach &  \textbf{TTFT}(s) & \textbf{TPOT}(ms) &\textbf{Hit}(\%) \\
    \midrule
\multirow{3}{*}{Low} & \Sys & \textbf{5.458} & \textbf{28.55} & 29.1 \\
& \Sys+$O(n)$& 5.712 & 29.25 & 28.0\\
& vLLM-LRU & 6.582 & 31.18 & 28.7 \\
\hline
\multirow{3}{*}{High} & \Sys & \textbf{7.068} & \textbf{30.51} & 12.50 \\
& \Sys+$O(n)$& 7.434  & 31.43 & 12.32\\
& vLLM-LRU &8.695 & 33.32& 2.09\\
     \bottomrule
\end{tabular}
\caption{Performance for $O(n)$-complexity algorithm. Running on Llama 3.1-8B model with LongBench ($qps=0.02$). }
\label{tab:on-ablation}
\end{table}
\begin{figure}[t]
    \centering
    \includegraphics[width=1.0\linewidth]{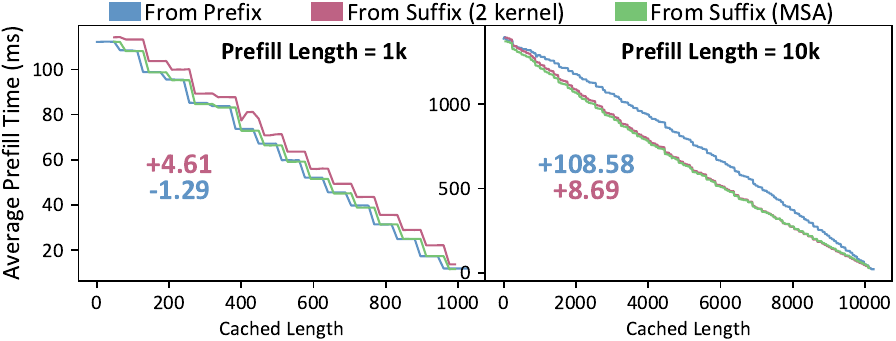}
    \caption{The MSA performance over various cached token length on model Llama 3.1-8B. The numbers indicate the average deviation between MSA and corresponding approach.}
    \label{fig:ablation-fused}
\end{figure}

\paragraph{Effects of Multi-Segment Attention kernel.}
As outlined in \S\ref{subsec:msa}, a naive implementation of MSA would require two separate attention kernel calls. To quantify the benefit of our fused MSA kernel, we compare three caching scenarios using a Llama 3.1‑8B model: (1) caching \textbf{from prefix} , (2) caching \textbf{from suffix} implemented with two kernel calls, and (3) caching \textbf{from suffix} using our single‑kernel MSA. 
Each request includes a variable number of cached tokens and ends with 128 new (uncached) tokens, simulating a user’s new input and creating a two‑segment cache layout.
For long sequences (10K tokens), caching the suffix delivers substantially lower latency than caching the prefix, as recomputation costs dominate. In this regime, the two‑kernel suffix implementation shows only a modest overhead compared to MSA (average 8.69ms). For short sequences (1K tokens), however, the absolute recomputation savings from suffix caching are small, and the fixed overhead of launching two kernels becomes significant (average 4.61ms). This makes the two‑kernel approach unattractive compared to prefix‑only caching. MSA eliminates this overhead, performing suffix‑only attention in a single kernel and achieving latency nearly identical to that of prefix‑only caching. Thus, MSA enables the practical use of suffix‑focused caching even for shorter requests, extending the applicability of our multi‑segment strategy across a wider range of sequence lengths.

\begin{figure}[t]
    \centering
    \includegraphics[width=1.0\linewidth]{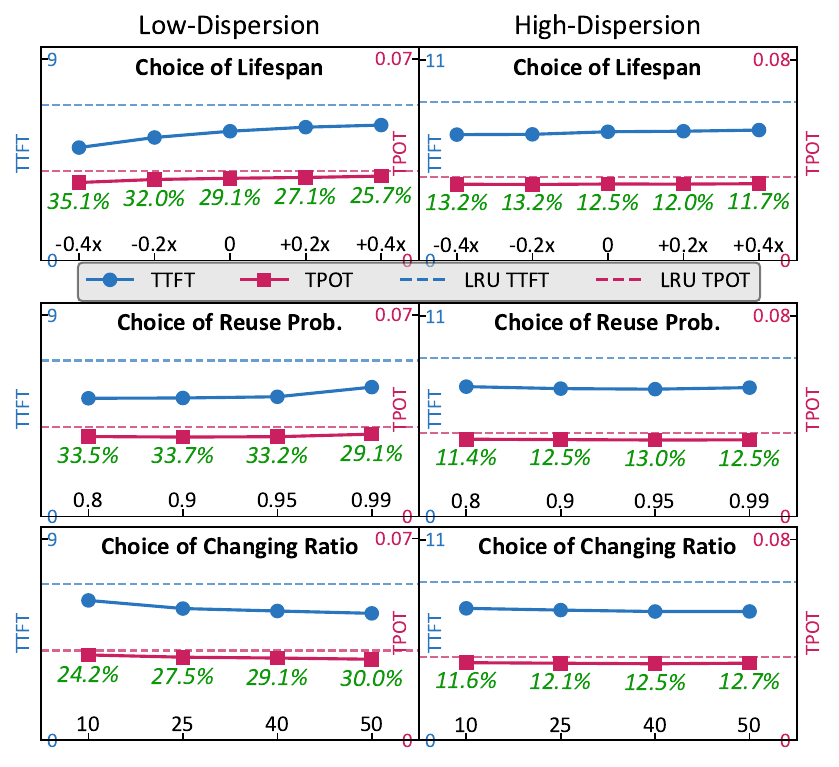}
    \caption{Hyper-parameter sensitivity study on Llama 3.1-8B with LongBench $(qps=0.02)$. The green numbers in the figure indicate the corresponding hit rates (block-level).}
    \label{fig:sen-exp}
\end{figure}

\subsection{Hyper-parameter Sensitivity}
\label{subsec:hyper-exp}
\Sys’s frequency function is defined by three parameters: lifespan (X-coordinate of the turning point), the reuse probability (Y‑coordinate of the turning point), and the slope change ratio. 
Figure~\ref{fig:sen-exp} shows that performance remains stable across a wide range of each parameter.

\paragraph{Lifespan.}
The TTFT and hit rate of \Sys vary little with lifespan. 
In our workload, the intervals between reusable KV‑cache requests are generally short. Even with a conservative P99‑based parameter choice, a reduced lifespan still covers the vast majority of requests. 
In all values tested, \Sys consistently outperforms vLLM‑LRU.
It is worth noting that while an excessively short lifespan may prematurely discard reusable blocks and an overly long one may retain low‑utility data, the performance penalty remains marginal within a reasonable range.

\paragraph{Reuse Probability}
The shape of the frequency function is stable as long as the value is sufficiently high, making the hit rate and the latency steady. Setting it too low shifts the policy toward recomputation‑cost-optimal eviction, which may hurt performance in reuse‑heavy workloads.
We find that setting this probability within a moderate range (e.g., between 0.3 and 0.7) yields nearly optimal performance. This insensitivity stems from the fact that the eviction policy operates on relative priorities among cache blocks rather than absolute probability thresholds, making the system adaptable to different probability calibrations.

\paragraph{Slope change ratio}
The performance of \Sys drops when the ratio is set to its minimum value (10), where the blocks decay too slowly and the low‐utility blocks persist. Despite obvious sensitivity at a lower value, \Sys maintains stable performance across all other tested values, demonstrating robustness within a practical operating range.
The eviction algorithm is sufficiently efficient in managing the priority updates regardless of the decay rate, ensuring consistent performance across different slope choices

\begin{figure}[t]
    \centering
    \includegraphics[width=1.0\linewidth]{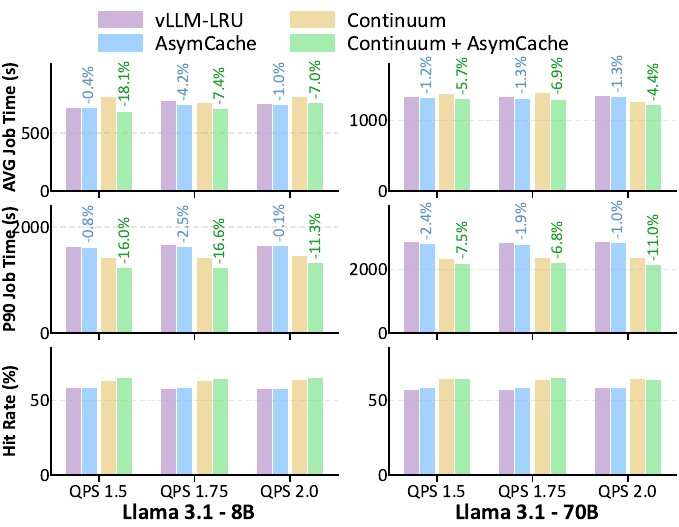}
    \caption{Results on BFCL Dataset, an agentic workload.}
    \label{fig:agent-exp}
\end{figure}

\subsection{Integration with Agentic System}
We evaluate \Sys in an agent serving scenario using the Berkeley Function Calling Leaderboard (BFCL) v4~\cite{patil2025berkeley}, following two prior agent serving systems Autellix~\cite{luo2025autellix} and Continuum~\cite{li2025continuum}. We use the BFCLv4 Web Search part and pre‑generate outputs with GPT‑5.1 to ensure reproducibility and fair comparison across test groups. Experiments are conducted on both Llama 3.1‑8B and Llama 3.1‑70B across varying QPS. The results are reported in Figure~\ref{fig:agent-exp}.

In agent workloads, the intervals between consecutive turns are shorter and more predictable than in human‑driven conversations, as described in \S\ref{subsec:application}. This leads to a low‑dispersion pattern where the overall workload behavior closely resembles the low‑dispersion multi‑turn scenario: cache hit rates between vllm-LRU and \Sys are nearly identical, or \Sys sometimes even worse. However, since \Sys preferentially retains blocks at later sequence positions, which incur substantially higher recomputation costs, it achieves average job latency reductions of 0.4\% to 4.2\% compared to vllm-LRU.
Continuum introduces a TTL‑based KV cache management for agent workloads: it estimates the execution time of each tool call, assigns a time‑to‑live to the corresponding KV cache, pins the cache in GPU memory during the TTL window, and prioritizes requests with pinned caches in the scheduler. This request‑level strategy already outperforms conventional approaches (particularly in P90 tail job latency metric). 

We also implement \Sys on top of Continuum without modifying its core scheduling logic. While Continuum manages KV cache at request granularity, \Sys further optimizes within each request by prioritizing lower‑value cache blocks to evict. 
Under nearly identical cache hit rates, the combined system reduces average latency by 4.4\% - 18.1\% compared to Continuum alone, and P90 tail job latency achieves 6.8\% - 16.6\% reduction as well. 
These results demonstrate the generality and composability of \Sys as a block‑level optimization that complements existing request‑oriented agent serving systems.

\label{subsec:agent-exp}

%% file: chapters/conclusion.tex
\section{Discussion and Related Works}
\paragraph{Systems with KV-Cache reusing.}
\indent
Recent research has focused on leveraging KV-cache reuse to accelerate inference. Systems like SGLang~\cite{zheng2024sglang} and vLLM~\cite{kwon2023efficient} enable prefix caching. The former at the request granularity using a RadixTree algorithm, and the latter at the cache block level via PagedAttention. However, both rely on LRU eviction, which ignores the asymmetric recomputation cost of tokens and may underperform in long-context scenarios compared to \Sys.
Other works explore caching KV-cache on on-chip memory for shared reuse across users, similar to \Sys. PromptCache~\cite{gim2024prompt} requires a pre-defined schema to specify reusable segments explicitly. RAGCache~\cite{jin2025ragcache} targets retrieval-augmented generation by managing cache with a tree structure. ChunkAttention~\cite{ye-etal-2024-chunkattention} achieves efficient KV-cache management by splitting contexts into chunks and building a prefix tree, yet it remains focused on prefix-only caching. These approaches are often specialized to particular scenarios.
CachedAttention~\cite{gao2024cost} and Pensieve~\cite{yu2025stateful} are specialized for multi-turn dialogue but do not support cross-user cache sharing. CachedAttention introduces a scheduler-aware eviction policy that considers the task queue to determine eviction priority, an idea that is orthogonal and potentially complementary to \Sys. Pensieve prioritizes caching tokens in later positions to reduce recomputation costs. However, as our experiments in \S\ref{subsec:e2e} show, its inability to cache high-reuse prefixes and its linear-time eviction complexity limit its flexibility and general applicability relative to \Sys.

\paragraph{KV‑Cache offloading and hierarchical storage.}
While \Sys currently stores KV cache solely in GPU memory, several existing systems explore offloading it to DRAM, local disk, or even remote nodes. Pensieve~\cite{yu2025stateful}, RAGCache~\cite{jin2025ragcache}, and APIServe~\cite{abhyankar2024apiserve} incorporate DRAM as an additional tier in a multi‑level cache hierarchy. HCache~\cite{gao2025fast}, CachedAttention~\cite{gao2024cost}, Cake~\cite{jin2024compute}, AttentionStore~\cite{gao2024attentionstore} and HotPrefix~\cite{li2025hotprefix} further extend this hierarchy to local disk, while CacheGen~\cite{liu2024cachegen} utilizes remote machines and investigates compression techniques for transmission. Many of these works also explore overlapping data transfer with computation when moving KV cache from a lower‑level storage back to the GPU, a technique employed, for instance, by HCache.
Although \Sys does not yet support a multi‑level storage hierarchy, its MSA mechanism and eviction algorithm are compatible with such an architecture; evicted blocks could simply be written to a lower storage tier. However, further optimization opportunities remain. For example, the latency of loading a block from a lower tier depends primarily on its size rather than its position, unlike recomputation costs. This suggests that eviction policies could differ across storage tiers, a direction we plan to explore in future work.


\paragraph{Agentic Serving Systems.}
Recent work recognizes that LLM agents exhibit structured execution patterns distinct from conventional chatbots. Parrot~\cite{lin2024parrot} exposes application‑level DAGs via Semantic Variables to enable cross‑request optimization. Plan Caching~\cite{zhang2025cost} caches reusable plan templates extracted from completed agent tasks, reducing cost while preserving accuracy.
At the storage layer, InferCept~\cite{abhyankar2024infercept} manages KV Cache with min-waste interception policy. KVFlow~\cite{pan2026kvflow} replaces LRU with an Agent Step Graph that prioritizes blocks by their estimated steps to execution. Continuum~\cite{li2025continuum} employs TTL to pin KV cache during tool call stalls. Tokencake~\cite{bian2025tokencake} uses agent‑aware spatial partitioning and time‑based offloading to mitigate memory contention and idle waiting. At the scheduling layer, Autellix~\cite{luo2025autellix} treats programs as first‑class citizens and preempts based on completed calls, ThunderAgent~\cite{kang2026thunderagent} abstracts agent workflows as Programs to unify heterogeneous resources.
These systems primarily optimize at request or program granularity: scheduling across requests, preserving entire KV caches during tool calls, or caching plan templates. In contrast, \Sys operates at per‑request, per‑block granularity, optimizing eviction within a single request based on asymmetric recomputation costs. Our approach is orthogonal and can be integrated into existing agent serving systems as a complementary component (\S\ref{subsec:agent-exp}).
Tighter co‑design between block‑level eviction and program‑level policies remains future work.

\section{Conclusion}
We present \Sys, an LLM serving system built upon a novel Multi-Segment Attention (MSA) mechanism that enables flexible management of the KV cache. Different from existing caching policies, \Sys consciously exploits the asymmetric recomputation costs of cache blocks while carefully balancing reuse probability against caching benefit, prioritizing KV cache eviction
based on each block’s marginal contribution to expected attention
latency. This leads to a general, efficient, and flexible cache management algorithm, which is further supported by chunk scheduler.
Experiments demonstrate that \Sys achieves an average TTFT improvement of 1.90–2.03× and an average TPOT improvement of 1.62–1.71× over existing approaches. The benefits extend to agent serving workloads, where \Sys integrates seamlessly into existing systems such as Continuum, reducing average job latency by up to 18.1\%.

%% file: chapters/appendix.tex
\clearpage
\section{Appendix}

\subsection{Properties of Order-Preserving Rule}
We now proof that only exponential function can satisfy the order-preserving rule proposed in Section \ref{subsec:freq}. 
\begin{lemma}\label{lem:exp}
    Let $f: \mathbb{R} \to \mathbb{R}$ be a continuous, non-negative, and non-constant function. Furthermore, assume $f$ satisfies the functional equation:
    \[
    \frac{f(x+y)}{f(x)} = g(y) \quad \text{for all } x, y \in \mathbb{R},
    \]
    where $g: \mathbb{R} \to \mathbb{R}$ is some function. Then $f$ is necessarily of the exponential form $f(x) = C_1 \cdot e^{C_2 x}$ for some constants $C_1 > 0$ and $C_2 \in \mathbb{R}$.
\end{lemma}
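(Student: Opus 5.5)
Since the hypothesis divides by $f(x)$ for every $x\in\mathbb{R}$, it implicitly requires $f(x)\neq 0$. Together with non-negativity this gives $f(x)>0$ everywhere, and that is the first thing I would record. (If one prefers not to read this into the statement, there is a fallback. Suppose $f(x_0)=0$ at some point. Picking $x$ with $f(x)>0$ and setting $y=x_0-x$ gives $g(y)=0$. Then $f(z+y)=g(y)f(z)=0$ for every $z$ with $f(z)\neq 0$, and this forces degeneracy. I would simply adopt the implicit convention that the quotient is defined, hence $f>0$.)

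With positivity in hand, set $C_1:=f(0)>0$ and put $x=0$ in the equation to identify $g$: $g(y)=f(y)/f(0)$. In particular $g$ is continuous and strictly positive. Substituting this back yields
\[
f(x+y)\,f(0)=f(x)\,f(y)\quad\text{for all }x,y\in\mathbb{R}.
\]
Next I would pass to $h(x):=\ln\bigl(f(x)/f(0)\bigr)$, which is well defined and continuous because $f>0$. The identity above becomes Cauchy's additive equation $h(x+y)=h(x)+h(y)$, with $h(0)=0$.

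The core step is the classical argument that a continuous additive function is linear. From additivity one gets $h(nx)=nh(x)$ for integers $n$ by induction, together with $h(-x)=-h(x)$. Dividing, $h(x/m)=h(x)/m$, so $h(r)=r\,h(1)$ for every rational $r$. Continuity then extends this to all real $x$ by density of $\mathbb{Q}$, giving $h(x)=C_2x$ with $C_2:=h(1)$. Exponentiating yields $f(x)=C_1e^{C_2x}$ with $C_1=f(0)>0$. The non-constancy hypothesis additionally forces $C_2\neq 0$, although the statement only asks for $C_2\in\mathbb{R}$.

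The main obstacle is not the Cauchy step, which is routine, but justifying that $f$ never vanishes. That is what allows the logarithm and the division by $f(x)$ in the hypothesis. I would handle it by reading the functional equation as implicitly asserting that $f(x)\neq 0$ for all $x$, since otherwise the left-hand side is undefined, and state this convention explicitly at the start of the proof.
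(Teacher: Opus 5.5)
Your proof is correct and follows the paper's route: you set $x=0$ to get $g(y)=f(y)/f(0)$, reduce to Cauchy's equation for $f/f(0)$, and invoke continuity. The only differences are that you take logarithms and prove the continuous additive case from scratch (the paper just cites the multiplicative version), and that you get $f>0$ from the quotient being defined; your fallback sketch for positivity stops at ``this forces degeneracy'' and is not needed, and the paper's reading is cleaner: writing the hypothesis as $f(x+y)=g(y)f(x)$, a zero at $x_0$ gives $f(x_0+y)=0$ for every $y$, hence $f\equiv 0$, which contradicts non-constancy.
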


\begin{proof}
    Since $f$ is non-negative, non-constant, and satisfies the above equation, we have $f(x) > 0$ for all $x$. (If $f(x_0)=0$ for some $x_0$, then the equation implies $f(x)=0$ for all $x$, contradicting non-constancy). Define $h(x) = f(x) / f(0)$. Then $h(0) = 1$, $h$ is continuous, and from the given equation with $x=0$, we find $g(y) = f(y)/f(0) = h(y)$. Thus, the functional equation becomes:
    \[
    \frac{f(x+y)}{f(x)} = h(y) \quad \text{for all } x, y.
    \]
    Substituting the definition of $h$, we have:
    \[
    \frac{f(x+y)}{f(x)} = \frac{f(y)}{f(0)} \quad \text{or equivalently,} \quad \frac{f(x+y)}{f(0)} = \frac{f(x)}{f(0)} \cdot \frac{f(y)}{f(0)}.
    \]
    This implies that $h$ satisfies Cauchy's multiplicative equation:
    \[
    h(x+y) = h(x) h(y) \quad \text{for all } x, y \in \mathbb{R}.
    \]
    Since $h$ is continuous, non-negative (as $f$ is), and $h(0)=1$, the only non-trivial solutions are exponential functions: $h(x) = e^{\lambda x}$ for some $\lambda \in \mathbb{R}$. Recalling that $f(x) = f(0) \cdot h(x)$, we conclude:
    \[
    f(x) = C_1 \cdot e^{C_2 x},
    \]
    where $C_1 = f(0) > 0$ and $C_2 = \lambda \in \mathbb{R}$.
\end{proof}

\begin{theorem}
    Let $f: \mathbb{R} \to \mathbb{R}$ be a continuous, non-negative, and non-constant function. If $f$ satisfies the following \emph{order-preserving rule}: for all real numbers $x, y, a, b, s$ with $a, b \ge 0$ and not both zero, the expression $a f(x) - b f(y)$ has the same sign (non-negative or non-positive) as $a f(x+s) - b f(y+s)$, then $f$ must be an exponential function of the form $f(x) = C_1 \cdot e^{C_2 x}$.
\end{theorem}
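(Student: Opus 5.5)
The plan is to reduce the theorem to Lemma~\ref{lem:exp}. We show that the order-preserving rule forces $f>0$ everywhere and makes the ratio $f(x+s)/f(x)$ independent of $x$. The only real difficulty is that the rule is stated with \emph{weak} signs, so that a vanishing expression carries no information. We therefore never set $a f(x) - b f(y)$ exactly to zero. Instead we use choices of $(a,b)$ that make it strictly positive or strictly negative, and read off weak inequalities after the shift.

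\textbf{Step 1 (positivity).} Since $f$ is non-negative and non-constant, there is some $y$ with $f(y)>0$. Suppose $f(x_0)=0$ for some $x_0$. Take $a=1$ and any $b>0$. Then $f(x_0)-b f(y) = -b f(y) <0$, so the rule forces $f(x_0+s) - b f(y+s)\le 0$ for every $s$. Letting $b\downarrow 0$ with $s$ fixed gives $f(x_0+s)\le 0$, hence $f(x_0+s)=0$ for all $s$. Thus $f\equiv 0$, contradicting non-constancy. So $f(x)>0$ for all $x$.

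\textbf{Step 2 (shift-invariance of ratios).} Fix $x,y,s$ and take $a=1$, $b\ge 0$. If $b<f(x)/f(y)$, then $f(x)-bf(y)>0$, so $f(x+s)-bf(y+s)\ge 0$, i.e.\ $b\le f(x+s)/f(y+s)$; here we divide by $f(y+s)>0$ from Step~1. Taking the supremum over such $b$ yields
\[
\frac{f(x+s)}{f(y+s)} \ge \frac{f(x)}{f(y)}.
\]
Symmetrically, every $b>f(x)/f(y)$ makes the unshifted expression strictly negative. This gives $b\ge f(x+s)/f(y+s)$, and hence the reverse inequality. Therefore
\[
\frac{f(x+s)}{f(y+s)}=\frac{f(x)}{f(y)} \quad\text{for all } x,y,s,
\]
equivalently $f(x+s)/f(x)=f(y+s)/f(y)$.

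\textbf{Step 3 (conclusion).} By Step~2, $g(s):=f(x+s)/f(x)$ is well defined independently of $x$; for instance $g(s)=f(s)/f(0)$. So $f$ satisfies the functional equation of Lemma~\ref{lem:exp}. Together with continuity, non-negativity and non-constancy, the Lemma gives $f(x)=C_1 e^{C_2 x}$.

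I expect Step~2 to be the main point of care: the argument must rely only on strict-sign hypotheses, approached from both sides, rather than on the degenerate zero case. With that handled, the rest is immediate.
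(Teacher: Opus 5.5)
Your proof is correct and follows the same route as the paper: derive from the order-preserving rule that $f(x+s)/f(x)$ does not depend on $x$, then invoke Lemma~\ref{lem:exp}. Your version is more careful than the paper's in two places. First, you prove $f>0$ directly from the rule, whereas the paper cites the positivity argument of Lemma~\ref{lem:exp}, which presupposes the functional equation not yet established. Second, you squeeze the ratio using only strict-sign cases, whereas the paper sets $t_0=f(x)/f(y)$ and assumes a vanishing expression stays zero after shifting; the weak ``non-negative or non-positive'' reading of the rule does not by itself guarantee this.
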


\begin{proof}
    The order-preserving rule implies that the function preserves the ordering of weighted sums under translation. We first show that this condition forces the ratio $f(x+y)/f(x)$ to be independent of $x$.

    \textbf{Step 1: Deriving the functional equation.}
    Since $f$ is non-constant, and non-negative, and the condition holds for all $a,b \ge 0$, we can choose specific weights to probe its consequences. Fix arbitrary $x, y, s \in \mathbb{R}$. Consider the difference $f(x) - t f(y)$ for some $t \ge 0$. The given rule states that for any $s$,
    \[
    \operatorname{sgn}\big( f(x) - t f(y) \big) = \operatorname{sgn}\big( f(x+s) - t f(y+s) \big),
    \]
    where $\operatorname{sgn}(\cdot)$ denotes the sign (or zero). This means the map $s \mapsto [f(x+s) - t f(y+s)]$ does not change sign as $s$ varies. In particular, if we can find a $t$ such that $f(x) - t f(y) = 0$, then we must also have $f(x+s) - t f(y+s) = 0$ for all $s$.

    Let $t_0 = f(x) / f(y)$. Since $f$ is positive (as argued in Lemma \ref{lem:exp}), $t_0$ is well-defined and non-negative. By construction, $f(x) - t_0 f(y) = 0$. Therefore, the rule forces:
    \[
    f(x+s) - t_0 f(y+s) = 0 \quad \text{for all } s \in \mathbb{R}.
    \]
    Substituting $t_0 = f(x)/f(y)$, we obtain:
    \[
    f(x+s) = \frac{f(x)}{f(y)} f(y+s) \quad \text{for all } s.
    \]
    Equivalently, for all $x, y, s$,
    \[
    \frac{f(x+s)}{f(x)} = \frac{f(y+s)}{f(y)}.
    \]
    This shows that the ratio $\frac{f(x+s)}{f(x)}$ is independent of $x$; it depends only on the increment $s$. Hence, there exists a function $g: \mathbb{R} \to \mathbb{R}$ such that:
    \[
    \frac{f(x+s)}{f(x)} = g(s) \quad \text{for all } x, s \in \mathbb{R}.
    \]
    This is precisely the functional equation required in Lemma \ref{lem:exp}.

    \textbf{Step 2: Applying Lemma \ref{lem:exp}.}
    The function $f$ satisfies all hypotheses of Lemma \ref{lem:exp}: it is continuous, non-negative, non-constant, and we have just shown it satisfies $f(x+s)/f(x) = g(s)$. Therefore, by Lemma \ref{lem:exp}, $f$ must be of the exponential form $f(x) = C_1 \cdot e^{C_2 x}$ for some constants $C_1 > 0$ and $C_2 \in \mathbb{R}$.
\end{proof}